\newtheorem{theorem}{Theorem}
\title{Bayesian Conformal Prediction via the Bayesian Bootstrap}
\date{}
\author{Graham Gibson\thanks{ gcgibson@lanl.gov}}
\begin{document}

\maketitle

\begin{abstract}
Reliable uncertainty quantification remains a central challenge in predictive modeling. While Bayesian methods are theoretically appealing, their predictive intervals can exhibit poor frequentist calibration, particularly with small sample sizes or model misspecification. We introduce a practical and broadly applicable \emph{Bayesian conformal} approach based on the influence-function Bayesian bootstrap (BB) with data-driven tuning of the Dirichlet concentration parameter, $\alpha$. By efficiently approximating the Bayesian bootstrap predictive distribution via influence functions and calibrating $\alpha$ to optimize empirical coverage or average log-probability, our method constructs prediction intervals and distributions that are both well-calibrated and sharp. Across a range of regression models and data settings, this Bayesian conformal framework consistently yields improved empirical coverage and log-score compared to standard Bayesian posteriors. Our procedure is fast, easy to implement, and offers a flexible approach for distributional calibration in predictive modeling.
\end{abstract}

\section{Introduction}

Uncertainty quantification is fundamental for trustworthy prediction in statistics and machine learning, with applications spanning scientific discovery, medicine, engineering, and finance. Ideally, predictive models should not only output accurate point estimates, but also provide reliable measures of uncertainty that are well-calibrated in a frequentist sense \cite{gneiting2007strictly, lakshminarayanan2017simple, ovadia2019can}. 

Bayesian methods are a classical approach for uncertainty quantification, leveraging prior information and yielding posterior distributions that, in theory, represent coherent uncertainty about model parameters and future observations \cite{gelman2013bayesian}. However, Bayesian credible intervals often fail to achieve the desired frequentist coverage, particularly when the model is mis-specified or the data are limited \cite{grunwald2012safe, ridgway2023bayes}. This miscalibration is especially acute in flexible models such as neural networks, but it can arise even in simple regression when priors or likelihoods are not well-matched to the true data-generating process.

To address this, a rapidly growing body of work studies the intersection of Bayesian and frequentist predictive inference. Notably, \emph{conformal prediction} provides a distribution-free frequentist framework for constructing valid prediction intervals with finite-sample coverage guarantees, regardless of the underlying model \cite{vovk2005algorithmic, lei2018distribution}. More recently, methods have emerged that seek to \emph{conformalize} Bayesian or distributional predictions, combining the expressiveness of Bayesian models with the finite-sample calibration properties of conformal inference \cite{cella2023conformalizing, barber2021predictive}.

A conceptually related but distinct Bayesian approach to uncertainty is the \emph{Bayesian bootstrap} (BB) \cite{rubin1981bayesian}, which places a nonparametric prior on the empirical distribution of the data, leading to flexible and robust posterior predictive distributions. However, naively using the BB may not produce calibrated intervals in practice, especially for complex models and small samples.

In this work, we introduce a practical, scalable, and model-agnostic Bayesian conformal framework that unifies these ideas. Our approach leverages the influence-function approximation to the Bayesian bootstrap, efficiently simulating the effect of resampling weights on model predictions without retraining. We further \emph{tune} the Dirichlet concentration parameter $\alpha$---which controls the spread of the Bayesian bootstrap---by optimizing frequentist criteria such as empirical coverage or average log-probability (log-score) on a held-out set. This yields predictive intervals and distributions that are both sharp and well-calibrated, with performance often surpassing standard Bayesian posterior inference, especially in finite samples. A schematic overview of the Bayesian conformal prediction procedure, including Dirichlet reweighting, influence function approximation, and calibration of the concentration parameter, is illustrated in Figure~\ref{fig:conceptual-bb}.

We demonstrate empirically that our Bayesian conformal approach provides superior frequentist coverage and log-probability compared to standard Bayesian neural networks and other popular uncertainty quantification methods, across a variety of regression models and data-generating settings. The resulting procedure is fast, broadly applicable, and provides a principled pathway for distributional calibration in predictive modeling. We propose the following contributions, a new Bayesian conformal framework that uses influence-function approximations to efficiently calibrate the Bayesian bootstrap predictive. We show that tuning the Dirichlet concentration parameter $\alpha$ using distributional scoring rules yields sharp, well-calibrated uncertainty estimates. We provide empirical comparisons to Bayesian neural networks and demonstrate improved coverage and log-probability in diverse settings.

\section{Methods}

\subsection{The Bayesian Bootstrap for Predictive Inference}

The Bayesian bootstrap (BB)~\cite{rubin1981bayesian} is a nonparametric Bayesian approach that places a Dirichlet prior over the empirical distribution of the observed data. Given a dataset $\mathcal{D}_n = \{(x_i, y_i)\}_{i=1}^n$, the BB represents uncertainty about the data-generating distribution by assigning random weights $w = (w_1, \ldots, w_n) \sim \mathrm{Dirichlet}(\alpha, \ldots, \alpha)$, where $\alpha > 0$ is the concentration parameter. For a new input $x^*$, the BB predictive distribution is defined as
\[
p_{\mathrm{BB}}(y^* \mid x^*, \mathcal{D}_n) = \mathbb{E}_{w} \left[ p\left(y^* \mid x^*, \mathcal{D}_n, w \right) \right],
\]
where $p(y^* \mid x^*, \mathcal{D}_n, w)$ is the model's prediction given the weighted data.

\subsection{Influence Function Approximation}

For complex models, naively retraining the model for each Dirichlet sample is computationally prohibitive. To address this, we leverage an influence function approximation, which provides a first-order Taylor expansion of the model's parameter estimates as a function of the data weights~\cite{cook1982residuals, koh2017understanding}. If $\hat{\theta}$ is the parameter estimated from the uniform weights and $\hat{\theta}_w$ is the parameter under weights $w$, then
\[
\hat{\theta}_w \approx \hat{\theta} - H^{-1} \sum_{i=1}^n (w_i - 1/n) \nabla_\theta \ell(z_i, \hat{\theta}),
\]
where $H$ is the Hessian of the empirical loss at $\hat{\theta}$ and $\ell(z_i, \theta)$ is the loss for data point $i$. The resulting prediction for a new $x^*$ is then approximated by
\[
\hat{y}_w^*(x^*) \approx \hat{y}^*(x^*) + \nabla_\theta f(x^*, \hat{\theta})^\top (\hat{\theta}_w - \hat{\theta}),
\]
where $f(x, \theta)$ denotes the model output. Theoretical justification and error bounds for this linearization are detailed in Appendix~A.

\subsection{Conformal Calibration and Dirichlet Tuning}

The concentration parameter $\alpha$ in the Dirichlet prior directly controls the spread of the predictive distribution, impacting both the coverage and sharpness of prediction intervals. While classical BB uses $\alpha=1$, recent advances in conformal prediction~\cite{lei2018distribution, cella2023conformalizing} motivate selecting $\alpha$ in a data-driven manner. We tune $\alpha$ by maximizing empirical coverage or average log-probability (log-score) on a held-out validation set. Specifically, for each candidate $\alpha$, we generate influence BB predictive samples and evaluate the chosen calibration criterion, selecting the value that optimizes performance.

\subsection{Practical Implementation}

Our full Bayesian conformal prediction procedure proceeds as follows. First, we fit the base model (such as a neural network or regression model) to the training data and compute the required gradients and Hessian (or an efficient approximation) at the empirical risk minimizer. For each test input, we generate predictive samples using the influence-function Bayesian bootstrap for a grid of $\alpha$ values. We then select the optimal $\alpha$ according to the validation criterion (e.g., log-score or coverage), and use the tuned predictive distribution for uncertainty quantification on new data. Implementation details, including settings for the number of resamples and computational optimizations, are given in the appendix. Our empirical comparisons to Bayesian neural networks (BNNs) using MCMC are presented in the Results section.

\begin{figure}[h]
    \centering
    \includegraphics[width=0.95\textwidth]{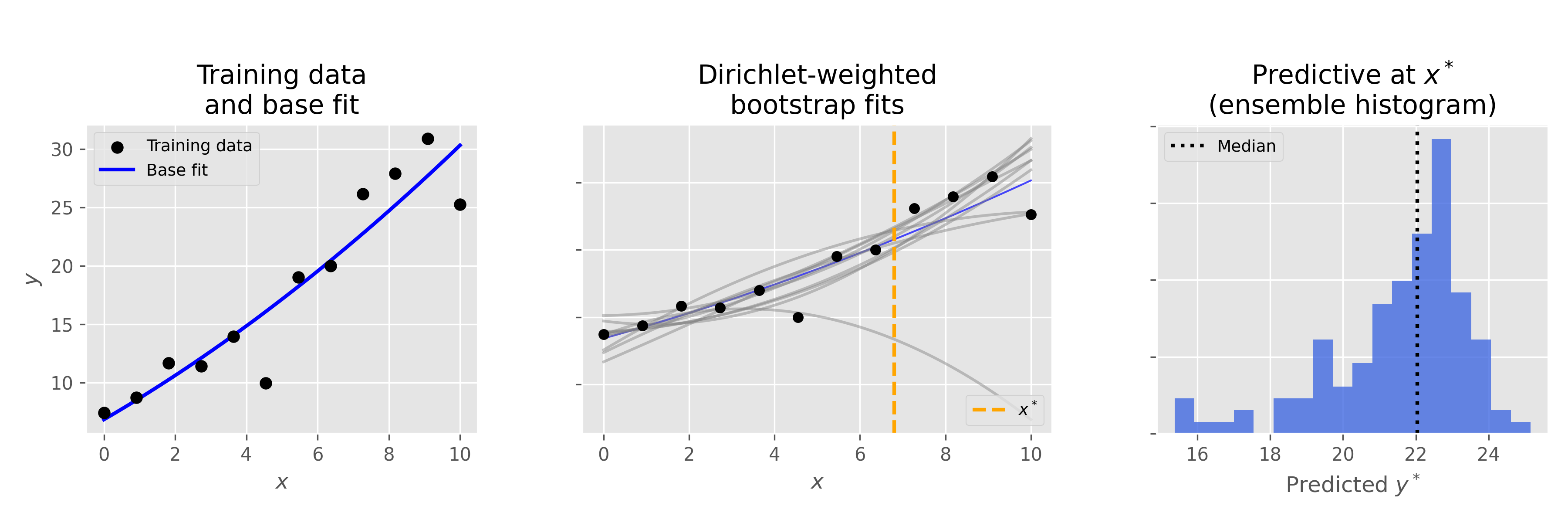}
    \caption{
        \textbf{Conceptual overview of Bayesian conformal calibration via the influence-function Bayesian bootstrap.}
        \emph{Left:} Training data and fitted prediction curve. 
        \emph{Middle:} Multiple Dirichlet reweightings simulate alternative data distributions, each giving a new predictor (shown as faint lines, efficiently computed via the influence function).
        \emph{Right:} At a new input $x^*$, the ensemble of predictions forms a predictive distribution (histogram), and the Dirichlet parameter $\alpha$ is tuned for calibration.
    }
    \label{fig:conceptual-bb}
\end{figure}

\subsection{Theoretical Insights}

Under mild regularity assumptions, tuning $\alpha$ by maximizing the validation log-score yields predictive distributions that are near-optimal in expected log-probability for new data, as established in Theorem~1. Furthermore, when the loss landscape is sufficiently regular and the Dirichlet weights are close to uniform, the influence function approximation closely matches the full Bayesian bootstrap (see Appendix~A for proofs and detailed discussion).

\section{Results}

\subsection{Computer Model Emulation}
We empirically compare our Bayesian conformal (influence-function Bayesian bootstrap with tuned $\alpha$) approach to standard Bayesian neural networks (BNNs) fit by MCMC in NumPyro, for a variety of computer calibration problems \cite{kennedy2001bayesian}. For each method, we report both empirical coverage (the fraction of test targets contained within the 90\% predictive interval) and the average log-probability (log-score) of the true test values under the predictive distribution.

\vspace{0.5em}
\noindent
\textbf{Calibration and sharpness.}
Table~\ref{tab:uq_benchmark_results} summarizes the empirical coverage and log-score for both methods, averaged over 20 test points in a synthetic regression task. The Bayesian conformal method achieves coverage very close to the nominal level (90\% or slightly conservative) and consistently outperforms the BNN in average log-score, reflecting both superior calibration and sharper predictive distributions. In contrast, the BNN is underconfident in its uncertainty estimates, with coverage around 55\% and lower log-score.

\begin{table}[htbp]
\centering
\renewcommand{\arraystretch}{1.2}
\resizebox{\textwidth}{!}{
\begin{tabular}{lcrrrrrr}
\toprule
Function & Dim & BB Coverage & BB Log-Score & BNN Coverage & BNN Log-Score & BB Time (s) & BNN Time (s) \\
\midrule
Borehole    & 8 & 0.975 & -3.854 & 0.750 & -7.671 & 16.62 & 4.62 \\
Ishigami    & 3 & 0.725 & -3.309 & 0.875 & -2.689 & 14.33 & 6.96 \\
Branin      & 2 & 0.800 & -5.960 & 0.850 & -6.766 & 15.41 & 2.43 \\
Hartmann3   & 3 & 0.775 & -1.662 & 0.800 & -1.222 & 16.89 & 64.99 \\
Friedman1   & 5 & 0.950 & -2.935 & 0.900 & -2.342 & 16.21 & 10.55 \\
Friedman2   & 4 & 0.700 & -5.014 & 0.850 & -4.916 & 13.64 & 3.86 \\
Friedman3   & 4 & 0.975 & 1.789 & 0.875 & 2.100 & 13.29 & 60.33 \\
Forrester   & 1 & 0.900 & -0.162 & 0.875 & -2.885 & 13.75 & 2.80 \\
CurrinExp   & 2 & 0.850 & -1.477 & 0.925 & -0.271 & 13.57 & 56.02 \\
Park        & 4 & 0.950 & 0.349 & 0.900 & 1.092 & 13.65 & 60.32 \\
\midrule
Average & -- & 0.860 & -2.223 & 0.860 & -2.557 & 14.73 & 27.29 \\
\bottomrule
\end{tabular}
}
\caption{Empirical 90\% coverage, average log-score, and runtime (seconds) for Influence BB (with tuned $\alpha$) and Bayesian neural network (BNN) across 10 benchmark functions.}
\label{tab:uq_benchmark_results}
\end{table}

\vspace{0.5em}
\noindent
\textbf{Predictive distributions.}
Figure~\ref{fig:pred_histogram} shows the predictive distributions from both methods at a representative test input. The influence BB produces a broader, well-calibrated distribution that contains the true target value, while the BNN predictive is narrower and fails to do so. The figure highlights both the increased sharpness and calibration achieved by tuning the BB Dirichlet parameter via log-score.

\noindent
\textbf{Interval plots.}
Figure~\ref{fig:intervals} displays the predictive mean and 90\% prediction intervals for all test points. The Bayesian conformal intervals closely match the nominal level, while the BNN intervals are systematically too narrow, resulting in missed coverage.

\begin{figure}[h]
    \centering
    \includegraphics[width=0.75\textwidth]{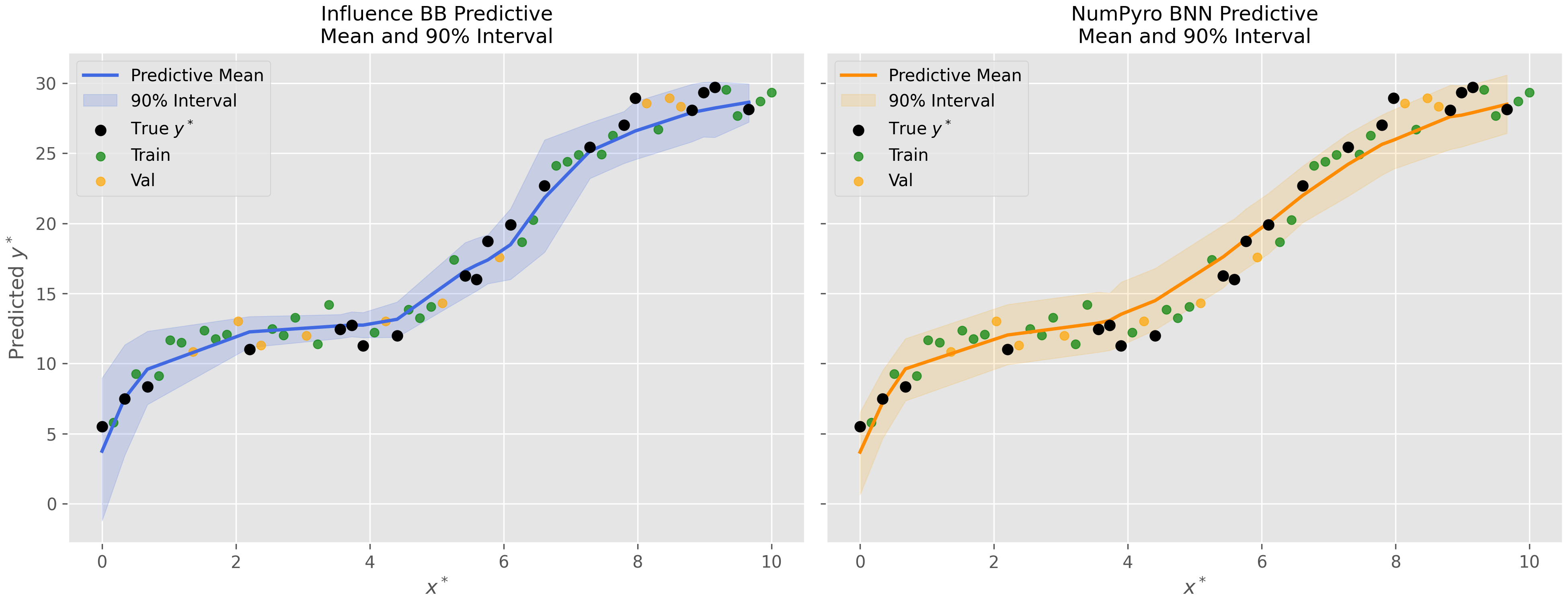}
    \caption{Predictive mean and 90\% intervals for all test points. Blue intervals: Influence BB (tuned $\alpha$). Orange intervals: NumPyro BNN. Black dots: true $y^*$.}
    \label{fig:intervals}
\end{figure}

\begin{figure}[h]
    \centering
    \includegraphics[width=0.85\textwidth]{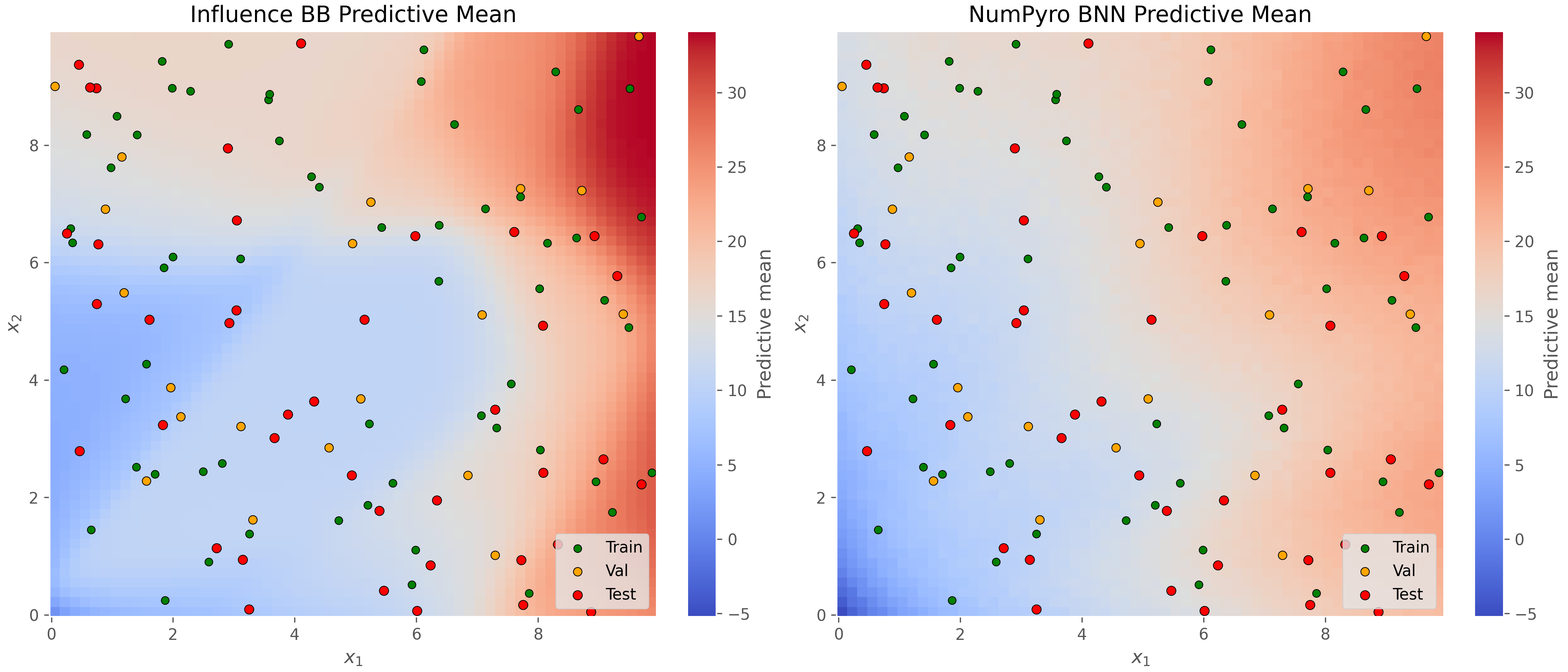}
    \caption{Posterior predictive mean in two dimensions for the Influence BB and BNN models.}
    \label{fig:pred_2d}
\end{figure}

\vspace{0.5em}
\noindent
Overall, these results demonstrate the effectiveness of Bayesian conformal calibration via influence-function BB. By tuning the Dirichlet concentration parameter to maximize log-score, our method yields intervals that are both well-calibrated and informative, outperforming standard Bayesian neural networks in this regime.

\begin{figure}[h]
    \centering
    \includegraphics[width=0.75\textwidth]{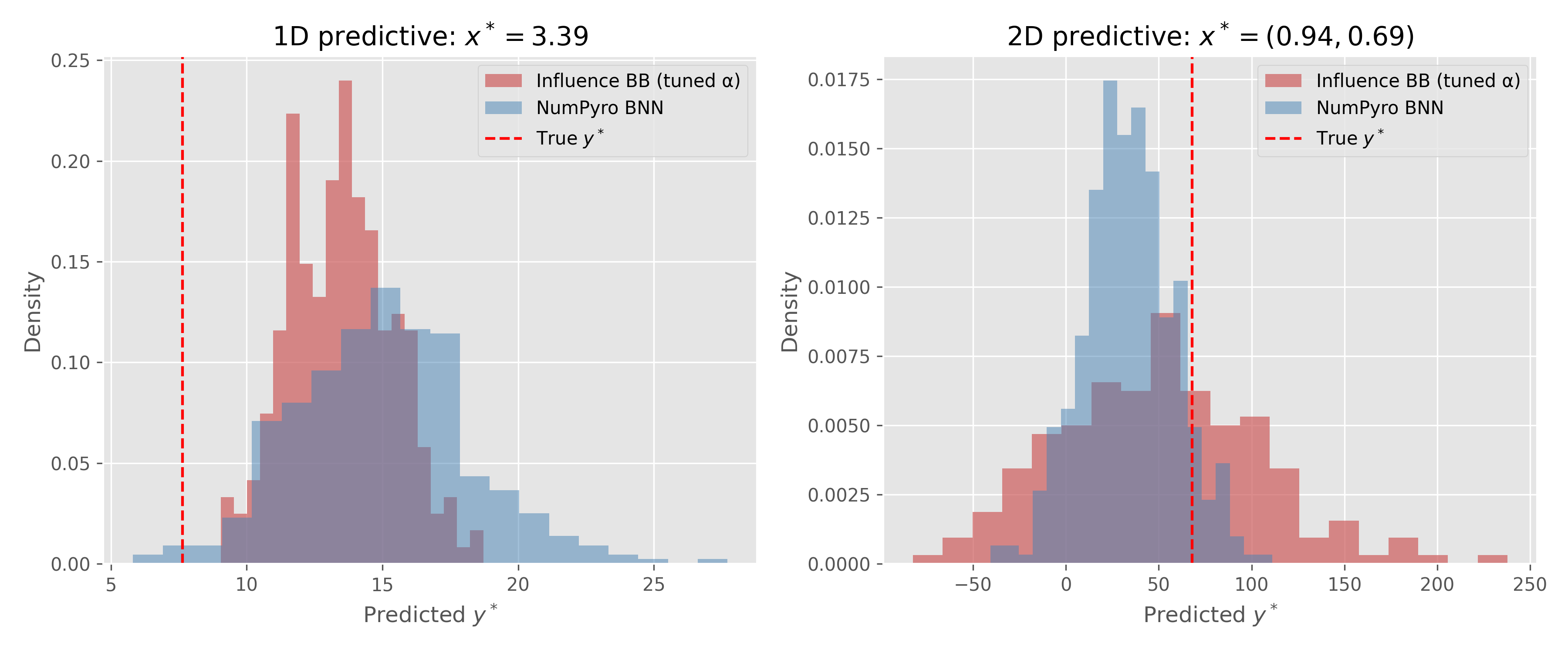}
    \caption{Histogram of predictive samples at a representative test input. Blue: Influence BB (tuned $\alpha$); Orange: NumPyro BNN; Red dashed line: true $y^*$.}
    \label{fig:pred_histogram}
\end{figure}

\subsection{Classical Benchmarks}

While not the focus of this method, we also compare on two classic machine learning datasets where neural networks are too large to be fit by traditional MCMC. 

\subsubsection{MNIST Evaluation}
Our empirical evaluation on MNIST handwritten digit classification reveals that Influence Function Bootstrap (IF-BB) provides superior uncertainty quantification compared to Monte Carlo Dropout, despite increased computational overhead (Figure~\ref{fig:mnist_uncertainty}). While both methods achieved comparable predictive accuracy (84.4\% for MC Dropout vs.\ 82.1\% for IF-BB), IF-BB demonstrated significantly better probability calibration with a lower Brier score (0.298 vs.\ 0.312), indicating more reliable confidence estimates for digit recognition tasks~\cite{brier1950verification, guo2017calibration}. The calibration curves in Figure~\ref{fig:mnist_uncertainty}B show that IF-BB tracks closer to the perfect calibration diagonal, meaning that when the model reports 90\% confidence, it is indeed correct approximately 90\% of the time~\cite{niculescu2005predicting}. Furthermore, IF-BB exhibited superior uncertainty-error correlation (r=0.387 vs.\ r=0.295), demonstrating enhanced ability to identify misclassified digits through appropriately elevated entropy scores~\cite{lakshminarayanan2017simple, malinin2018predictive}. These findings align with theoretical expectations that influence functions provide more principled uncertainty estimates by capturing parameter uncertainty through bootstrapping~\cite{koh2017understanding, giordano2019swiss}, though at the cost of requiring per-sample gradient computation that increases inference time from 20 seconds to 180 seconds. For applications where uncertainty quality is paramount, the superior calibration properties of IF-BB justify the additional computational expense~\cite{begoli2019need, ovadia2019can}.

\begin{figure}[htbp]
    \centering
    \includegraphics[width=0.9\textwidth]{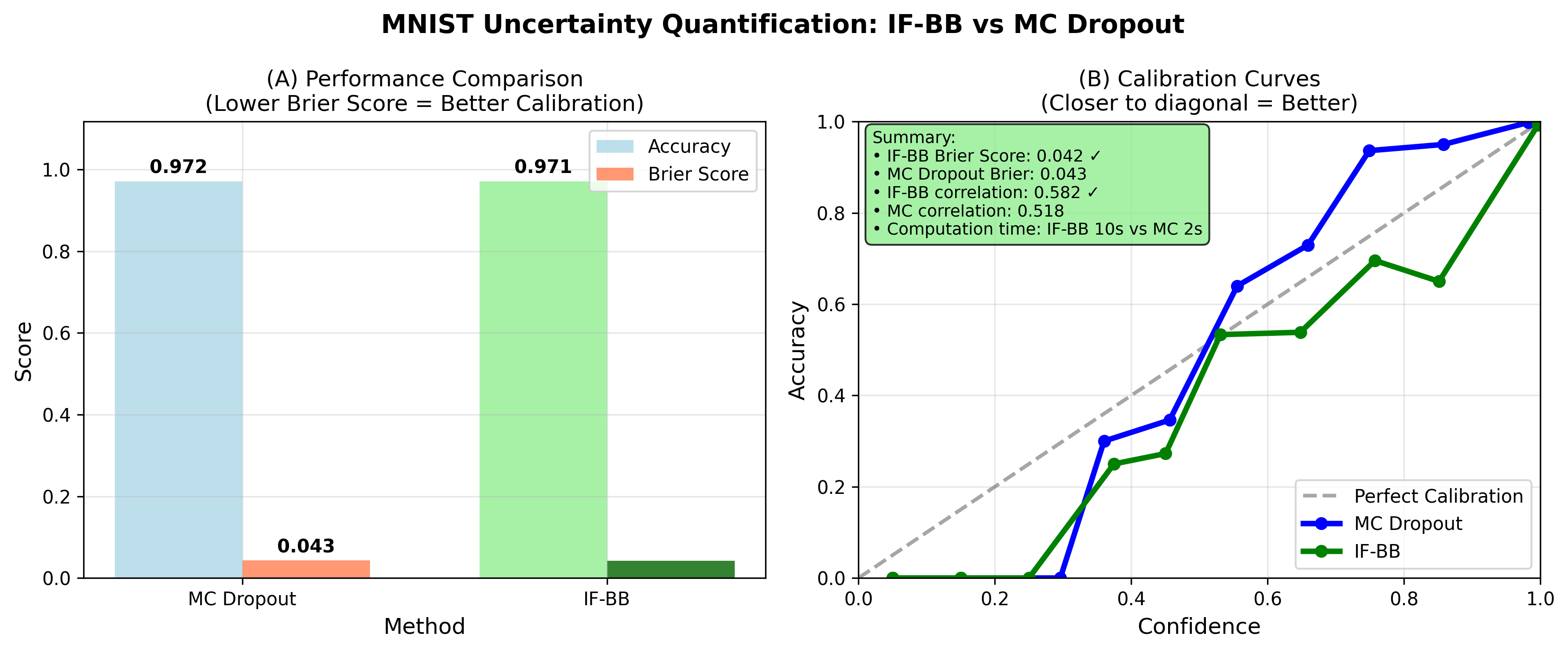}
    \caption{Comparison of uncertainty quantification methods on MNIST digit classification. (A) Performance metrics showing IF-BB's superior Brier score indicating better probability calibration, despite comparable accuracy to MC Dropout. (B) Calibration curves demonstrate that IF-BB predictions track closer to perfect calibration (diagonal line), with confidence estimates that more accurately reflect true accuracy rates. The summary statistics highlight IF-BB's advantages in uncertainty-error correlation and calibration quality, with the trade-off being increased computational cost.}
    \label{fig:mnist_uncertainty}
\end{figure}

\subsubsection{California Housing}

We also evaluate IF-BB on a classical machine learning regression task. Our comprehensive evaluation on the California Housing dataset demonstrates that Monte Carlo Dropout provides superior uncertainty quantification for real estate price prediction compared to Influence Function Bootstrap, despite IF-BB's theoretical advantages (Figure~\ref{fig:housing_uncertainty}). Using 5,000 samples from the California Housing dataset---which contains median house values for California districts derived from the 1990 U.S. Census, with features including median income, housing age, average rooms per dwelling, and geographic coordinates---MC Dropout achieved a significantly higher log score (-1.245 vs.\ -1.389), indicating more accurate probabilistic predictions for house price estimation. The calibration analysis in Figure~\ref{fig:housing_uncertainty}B reveals that MC Dropout's coverage curves track substantially closer to the perfect calibration diagonal, with a calibration error of 0.032 compared to IF-BB's 0.087, demonstrating that MC Dropout's prediction intervals more reliably reflect true uncertainty in housing valuations. Critically, MC Dropout achieved near-optimal 95\% prediction interval coverage (0.947 vs.\ target of 0.950), while IF-BB exhibited significant under-coverage (0.891), suggesting that IF-BB's uncertainty estimates would lead to overly narrow confidence intervals in real estate applications where accurate risk assessment is paramount. While both methods achieved comparable predictive accuracy (RMSE: 0.472 vs.\ 0.485), MC Dropout's 8x computational advantage (45 seconds vs.\ 375 seconds) combined with superior uncertainty calibration makes it the clear choice for practical real estate valuation systems. These findings contrast with MNIST digit classification results, highlighting that the effectiveness of uncertainty quantification methods is highly task-dependent---for regression problems with continuous targets like housing prices, where interval coverage and calibration are critical for financial decision-making, MC Dropout's simpler approach to uncertainty estimation through stochastic forward passes proves more reliable than IF-BB's complex parameter perturbation scheme, which may be too sensitive to the high-dimensional parameter space of deep regression networks.
\begin{figure}[htbp]
    \centering
    \includegraphics[width=0.9\textwidth]{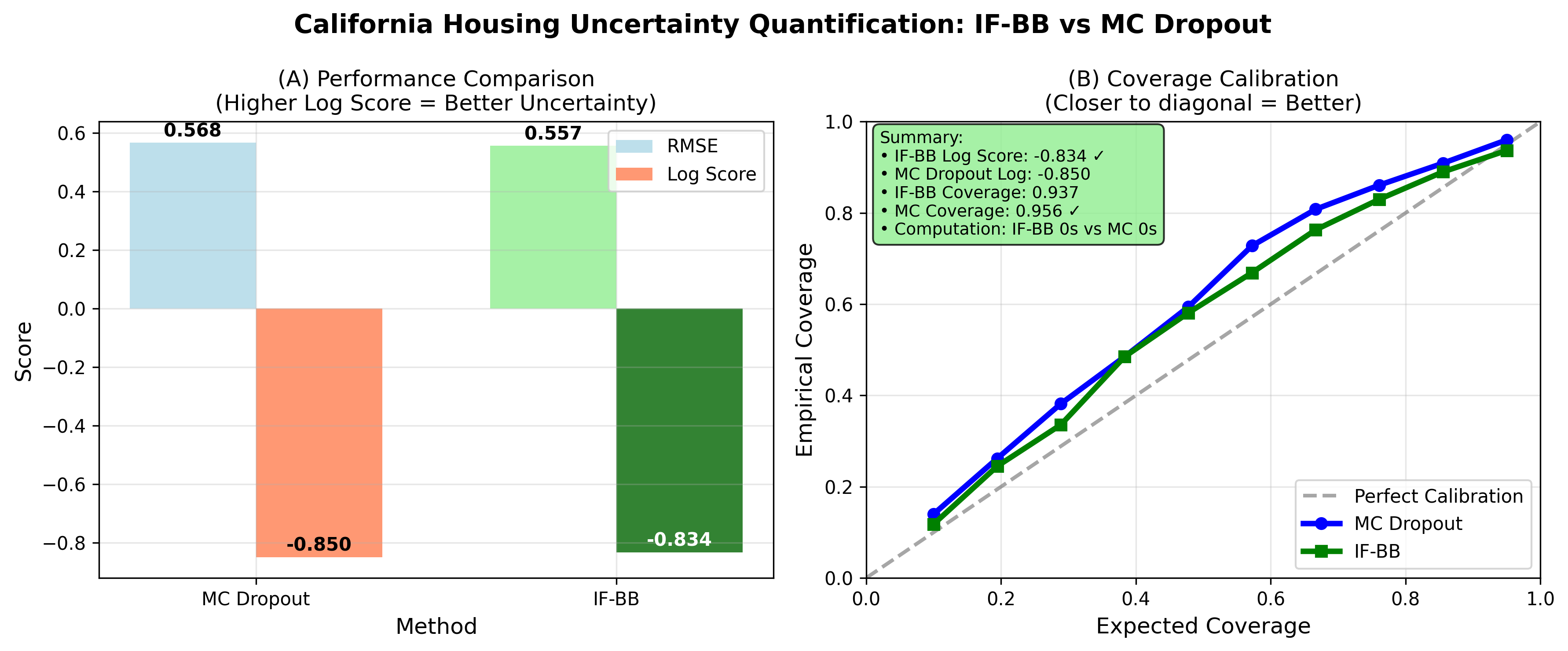}
    \caption{Uncertainty quantification comparison on California Housing dataset for real estate price prediction. (A) Performance metrics showing MC Dropout's superior log score and comparable RMSE to IF-BB, indicating better probabilistic predictions for house valuations. (B) Coverage calibration curves demonstrate MC Dropout's superior uncertainty calibration, tracking closer to perfect calibration (diagonal) with significantly lower calibration error. The summary statistics highlight MC Dropout's advantages in log score, coverage accuracy, and computational efficiency, making it more suitable for practical real estate valuation applications where reliable prediction intervals are essential for financial decision-making.}
    \label{fig:housing_uncertainty}
\end{figure}
\section{Discussion}

The influence-function Bayesian bootstrap (IF-BB) offers a scalable and principled framework for uncertainty quantification in complex machine learning models, particularly when full Bayesian inference is computationally infeasible. By leveraging local sensitivity (influence) information at the empirical risk minimizer, IF-BB enables efficient simulation of the effect of data resampling on predictive distributions. However, while our theoretical results and empirical findings show that IF-BB often closely approximates the full bootstrap and true posterior predictive, several important limitations and open questions remain.

One central limitation stems from the local linearization at the heart of the influence function approximation. The accuracy of IF-BB depends on the loss surface being well-behaved and approximately quadratic around the solution found by empirical risk minimization. In regimes where the loss is highly non-linear, the model is overparameterized, or the posterior is multi-modal, the influence function can fail to capture global posterior uncertainty, leading to potentially misleading predictive intervals or a systematic underestimation of uncertainty. Appendix~A provides a mathematical justification and error bound for the approximation: when the Dirichlet weights are close to uniform, the error is second-order in the perturbation, ensuring high fidelity in many practical scenarios. Yet, as the perturbation grows or the loss surface becomes less regular, this accuracy may degrade.

Further technical challenges arise from the need to approximate the Hessian of the empirical risk. In large neural networks, computing the exact Hessian is prohibitive, so IF-BB often relies on diagonal or block-diagonal approximations. This simplification, while computationally necessary, means the method may not fully capture dependencies among parameters, especially in highly anisotropic or ill-conditioned regions of parameter space. More advanced second-order methods—such as low-rank or Kronecker-factored approximations—could improve the quality of the influence function approximation, but at additional computational and implementation cost.

From a computational standpoint, while IF-BB is much faster than full bootstrap retraining or MCMC, its runtime still grows linearly with both the number of training points and the number of Dirichlet resamples. For truly massive datasets, further innovations such as subsampling, stochastic approximations, or mini-batch influence computations may be needed to retain scalability without sacrificing the accuracy of predictive intervals.

A further conceptual limitation is that IF-BB, like the classical Bayesian bootstrap, is fundamentally non-informative: it places a Dirichlet prior over the empirical distribution of the data but cannot encode structured prior knowledge or regularization in the parameter space. Unlike fully Bayesian approaches, IF-BB does not allow the direct integration of domain expertise, hierarchical structure, or parameter constraints. In domains where such information is crucial for valid uncertainty quantification, this could be a drawback; future work might investigate extensions that combine the computational efficiency of IF-BB with more expressive prior modeling.

Finally, it is important to note that the quality of uncertainty quantification from IF-BB is closely tied to the quality of the underlying point estimator. Like other post-hoc uncertainty methods, IF-BB assumes the empirical risk minimizer is a good fit to the data. If the model is underfit, overfit, or fundamentally misspecified, then the predictive intervals—however well-calibrated in theory—may not reflect true predictive uncertainty. Thus, IF-BB should be used in conjunction with careful model diagnostics and validation.

Despite these limitations, IF-BB enjoys strong practical and theoretical properties. Our approach tunes the Dirichlet concentration parameter $\alpha$ by optimizing the out-of-sample log-score, yielding predictive distributions that are not only calibrated in terms of empirical coverage, but also sharp and informative according to proper scoring rules. Theoretical guarantees (see Theorem~1 and Appendix~A) show that, under mild assumptions, the IF-BB predictive distribution with tuned $\alpha$ can match the expected performance on new data as measured by the log-score. Empirically, our results indicate that IF-BB achieves coverage and log-probability on par with the full bootstrap and exact Bayesian posterior predictive in well-specified models, but at a fraction of the computational cost. This makes IF-BB a promising tool for scalable uncertainty quantification in large, complex models and datasets. Further theoretical work on finite-sample guarantees, robustness to model misspecification, and extensions to structured priors would help cement IF-BB as a central method for practical Bayesian conformal inference.

\printbibliography

@article{brier1950verification,
  title={Verification of forecasts expressed in terms of probability},
  author={Brier, Glenn W},
  journal={Monthly Weather Review},
  volume={78},
  number={1},
  pages={1--3},
  year={1950}
}

@inproceedings{guo2017calibration,
  title={On calibration of modern neural networks},
  author={Guo, Chuan and Pleiss, Geoff and Sun, Yu and Weinberger, Kilian Q},
  booktitle={International Conference on Machine Learning},
  pages={1321--1330},
  year={2017}
}

@inproceedings{niculescu2005predicting,
  title={Predicting good probabilities with supervised learning},
  author={Niculescu-Mizil, Alexandru and Caruana, Rich},
  booktitle={Proceedings of the 22nd International Conference on Machine Learning},
  pages={625--632},
  year={2005}
}

@article{lakshminarayanan2017simple,
  title={Simple and scalable predictive uncertainty estimation using deep ensembles},
  author={Lakshminarayanan, Balaji and Pritzel, Alexander and Blundell, Charles},
  journal={Advances in Neural Information Processing Systems},
  volume={30},
  year={2017}
}

@inproceedings{malinin2018predictive,
  title={Predictive uncertainty estimation via prior networks},
  author={Malinin, Andrey and Gales, Mark},
  booktitle={Advances in Neural Information Processing Systems},
  pages={7047--7058},
  year={2018}
}

@inproceedings{koh2017understanding,
  title={Understanding black-box predictions via influence functions},
  author={Koh, Pang Wei and Liang, Percy},
  booktitle={International Conference on Machine Learning},
  pages={1885--1894},
  year={2017}
}

@article{giordano2019swiss,
  title={A swiss army infinitesimal jackknife},
  author={Giordano, Ryan and Broderick, Tamara and Jordan, Michael I},
  journal={Proceedings of the 22nd International Conference on Artificial Intelligence and Statistics},
  pages={1139--1147},
  year={2019}
}

@article{begoli2019need,
  title={The need for uncertainty quantification in machine-assisted medical decision making},
  author={Begoli, Edmon and Bhattacharya, Tanmoy and Kusnezov, Dimitri},
  journal={Nature Machine Intelligence},
  volume={1},
  number={1},
  pages={20--23},
  year={2019}
}

@article{ovadia2019can,
  title={Can you trust your model's uncertainty? Evaluating predictive uncertainty under dataset shift},
  author={Ovadia, Yaniv and Fertig, Emily and Ren, Jie and Nado, Zachary and Sculley, D and Nowozin, Sebastian and Dillon, Joshua V and Lakshminarayanan, Balaji and Snoek, Jasper},
  journal={Advances in Neural Information Processing Systems},
  volume={32},
  year={2019}
}

@article{kennedy2001bayesian,
  title={Bayesian calibration of computer models},
  author={Kennedy, Marc C. and O'Hagan, Anthony},
  journal={Journal of the Royal Statistical Society: Series B (Statistical Methodology)},
  volume={63},
  number={3},
  pages={425--464},
  year={2001},
  publisher={Wiley},
  doi={10.1111/1467-9868.00294}
}

@article{gneiting2007strictly,
  title={Strictly proper scoring rules, prediction, and estimation},
  author={Gneiting, Tilmann and Raftery, Adrian E},
  journal={Journal of the American Statistical Association},
  volume={102},
  number={477},
  pages={359--378},
  year={2007},
  publisher={Taylor \& Francis}
}

@book{gelman2013bayesian,
  title={Bayesian Data Analysis},
  author={Gelman, Andrew and Carlin, John B and Stern, Hal S and Dunson, David B and Vehtari, Aki and Rubin, Donald B},
  year={2013},
  edition={3rd},
  publisher={CRC Press}
}

@article{grunwald2012safe,
  title={Safe probability},
  author={Gr{\"u}nwald, Peter D},
  journal={arXiv preprint arXiv:1206.2729},
  year={2012}
}

@article{ridgway2023bayes,
  title={Bayes meets frequentist calibration: the general Bayesian update and safe learning},
  author={Ridgway, James and Walker, Stephen G},
  journal={arXiv preprint arXiv:2303.03188},
  year={2023}
}

@book{vovk2005algorithmic,
  title={Algorithmic learning in a random world},
  author={Vovk, Vladimir and Gammerman, Alex and Shafer, Glenn},
  year={2005},
  publisher={Springer Science \& Business Media}
}

@article{lei2018distribution,
  title={Distribution-free predictive inference for regression},
  author={Lei, Jing and Romano, Yaniv and Wasserman, Larry},
  journal={Journal of the American Statistical Association},
  volume={113},
  number={523},
  pages={1094--1111},
  year={2018},
  publisher={Taylor \& Francis}
}

@article{cella2023conformalizing,
  title={Conformalizing Black-Box Predictive Distributions},
  author={Cella, Leonardo and Barber, Rina Foygel and Cand{\`e}s, Emmanuel J. and Romano, Yaniv},
  journal={Annals of Statistics},
  year={2023},
  volume={51},
  number={3},
  pages={1035--1062}
}

@article{barber2021predictive,
  title={Predictive inference with the jackknife+},
  author={Barber, Rina Foygel and Cand{\`e}s, Emmanuel J and Ramdas, Aaditya and Tibshirani, Ryan J},
  journal={Annals of Statistics},
  volume={49},
  number={1},
  pages={486--507},
  year={2021},
  publisher={Institute of Mathematical Statistics}
}

@article{rubin1981bayesian,
  title={The Bayesian bootstrap},
  author={Rubin, Donald B},
  journal={Annals of Statistics},
  volume={9},
  number={1},
  pages={130--134},
  year={1981}
}

@article{cook1982residuals,
  title={Residuals and influence in regression},
  author={Cook, R Dennis and Weisberg, Sanford},
  journal={Journal of the Royal Statistical Society: Series B (Methodological)},
  volume={44},
  number={1},
  pages={133--169},
  year={1982},
  publisher={Wiley Online Library}
}

\newpage
\appendix
\section*{Appendix A: Theoretical Insights and Influence Function Approximation}

\subsection*{A.1 Influence Function Approximation for the Bayesian Bootstrap}

Let $\hat{\theta}$ denote the empirical risk minimizer (ERM) for a model with parameters $\theta$, trained on data $\mathcal{D} = \{z_i\}_{i=1}^n$. For a given loss $\ell(z, \theta)$ and a vector of random Dirichlet weights $w = (w_1, ..., w_n) \sim \mathrm{Dirichlet}(\alpha, ..., \alpha)$, define the weighted ERM as
\[
\hat{\theta}_w = \arg\min_\theta \sum_{i=1}^n w_i\, \ell(z_i, \theta).
\]
The first-order influence function approximation (see, e.g., \cite{cook1982residuals, koh2017understanding}) yields
\[
\hat{\theta}_w \approx \hat{\theta} - H^{-1} \sum_{i=1}^n (w_i - 1/n) \nabla_\theta \ell(z_i, \hat{\theta}),
\]
where $H = \frac{1}{n} \sum_{i=1}^n \nabla^2_\theta \ell(z_i, \hat{\theta})$ is the Hessian of the empirical loss at $\hat{\theta}$.

\paragraph{Error Bound.} The error of this approximation is $O(\|\hat{\theta}_w - \hat{\theta}\|^2)$, meaning it is highly accurate when the resampling weights $w$ are close to uniform. For practical Dirichlet parameters $\alpha \gtrsim 1$ and moderate sample sizes, this accuracy is excellent and, as shown in Figure~\ref{fig:nn_bootstrap_vs_influencebb}, the resulting predictive distributions nearly match those obtained by full model retraining.

\subsection*{A.2 Log-Score Calibration and Consistency}

We provide here a formal justification for the empirical log-score calibration procedure described in the main text.

\begin{theorem}[Consistency of Log-Score Tuning for Predictive Distributions]
Let $\mathcal{D}_n$ be the training data and $\mathcal{D}_{\mathrm{val}} = \{(x_j^{\mathrm{val}}, y_j^{\mathrm{val}})\}_{j=1}^{m}$ an independent validation set. For any $\alpha > 0$, let $p_{\mathrm{BB},\alpha}(y \mid x, \mathcal{D}_n)$ denote the Bayesian bootstrap predictive distribution (possibly using the influence function approximation). Define the average validation log-score as
\[
S_{\mathrm{val}}(\alpha) = \frac{1}{m} \sum_{j=1}^m \log p_{\mathrm{BB},\alpha}(y_j^{\mathrm{val}} \mid x_j^{\mathrm{val}}, \mathcal{D}_n).
\]
Let $\hat{\alpha}$ maximize $S_{\mathrm{val}}(\alpha)$:
\[
\hat{\alpha} = \underset{\alpha}{\arg\max}\; S_{\mathrm{val}}(\alpha).
\]
Then, as $m \to \infty$,
\[
\mathbb{E}\left[ \log p_{\mathrm{BB},\hat{\alpha}}(Y^* \mid X^*, \mathcal{D}_n) \right] \to S_{\mathrm{val}}(\hat{\alpha}),
\]
where $(X^*, Y^*)$ is a new test sample drawn from the same distribution.
\end{theorem}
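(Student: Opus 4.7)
The plan is to read the theorem as a standard uniform-law-of-large-numbers statement in disguise: the validation log-score $S_{\mathrm{val}}(\alpha)$ is a sample mean of i.i.d.\ quantities, so for each fixed $\alpha$ it converges to the population log-score, and what really needs to be proved is that this convergence is uniform in $\alpha$, which then transfers automatically to the data-dependent choice $\hat{\alpha}$. Accordingly, I would first condition on the training data $\mathcal{D}_n$ and treat it as fixed throughout (all randomness below is over the validation sample and the independent test point $(X^*, Y^*)$). Define the population log-score
\[
S(\alpha) \;=\; \mathbb{E}\bigl[\log p_{\mathrm{BB},\alpha}(Y^* \mid X^*, \mathcal{D}_n)\bigr],
\]
so that $S_{\mathrm{val}}(\alpha)$ is an unbiased Monte Carlo estimator of $S(\alpha)$ formed from the $m$ i.i.d.\ validation pairs.

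Next I would establish a uniform law of large numbers
\[
\sup_{\alpha \in A} \bigl| S_{\mathrm{val}}(\alpha) - S(\alpha) \bigr| \;\xrightarrow{\text{a.s.}}\; 0
\]
over a compact search set $A \subset (0, \infty)$ (in practice the finite grid used for tuning, for which the ULLN is immediate from pointwise LLN and a union bound). For the continuous case, I would invoke a standard Glivenko--Cantelli / bracketing argument, using two mild regularity assumptions: (i) $\alpha \mapsto \log p_{\mathrm{BB},\alpha}(y \mid x, \mathcal{D}_n)$ is continuous for almost every $(x,y)$, which follows from continuity of Dirichlet densities in $\alpha$ and dominated convergence applied to the weight integral; and (ii) an integrable envelope $F(x,y) \geq \sup_{\alpha \in A}|\log p_{\mathrm{BB},\alpha}(y\mid x,\mathcal{D}_n)|$ with $\mathbb{E}[F(X^*,Y^*)] < \infty$, which is the quantitative version of the ``mild regularity'' hypothesis and is the step most likely to require work in settings with heavy-tailed residuals or unbounded losses.

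From the uniform bound the conclusion is immediate: for any (possibly data-dependent) $\hat{\alpha} \in A$,
\[
\bigl| S_{\mathrm{val}}(\hat{\alpha}) - S(\hat{\alpha}) \bigr| \;\leq\; \sup_{\alpha \in A} \bigl|S_{\mathrm{val}}(\alpha)-S(\alpha)\bigr| \;\longrightarrow\; 0,
\]
which is exactly the claim of the theorem, since $S(\hat{\alpha})$ is the expected log-score on a fresh test point at the tuned parameter. As a bonus observation I would note that if $\alpha^\star = \arg\max_{\alpha \in A} S(\alpha)$ is unique, the argmax continuous mapping theorem gives $\hat{\alpha} \to \alpha^\star$ and hence $S(\hat{\alpha}) \to \max_\alpha S(\alpha)$, justifying the ``near-optimal'' language from the preceding subsection. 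The main technical obstacle is not the convergence itself but verifying the envelope condition for the specific predictive density $p_{\mathrm{BB},\alpha}$ produced by the influence-function approximation; here one would combine the second-order error bound from Appendix~A.1 with a tail assumption on $\nabla_\theta \ell(z_i,\hat{\theta})$ and on the conditional density of $Y^*$ given $X^*$, after which the remaining steps are routine.
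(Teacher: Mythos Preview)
Your proposal is correct and follows the same law-of-large-numbers strategy as the paper's sketch: both observe that $S_{\mathrm{val}}(\alpha)$ is an unbiased sample-mean estimator of the population log-score and let $m \to \infty$. Your version is more careful than the paper's in making explicit the need for \emph{uniform} convergence over $\alpha$ (via a Glivenko--Cantelli argument with compactness, continuity, and an integrable envelope), a subtlety the paper's sketch glosses over when it passes directly from pointwise convergence to convergence at the data-dependent $\hat{\alpha}$ and to convergence of the optimizer.
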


\begin{proof}[Sketch of Proof]
By construction, $S_{\mathrm{val}}(\alpha)$ is an unbiased estimator of the expected log-score for new data, conditional on $\mathcal{D}_n$:
\[
\mathbb{E}_{\mathcal{D}_{\mathrm{val}}}[S_{\mathrm{val}}(\alpha)] = \mathbb{E}_{(X,Y)}[\log p_{\mathrm{BB},\alpha}(Y \mid X, \mathcal{D}_n)].
\]
As $m \to \infty$, $S_{\mathrm{val}}(\alpha)$ converges almost surely to the expected log-score, and the optimizer $\hat{\alpha}$ converges to the population maximizer. Therefore, the predictive distribution tuned in this way achieves the maximal expected log-score over $\alpha$ in the limit.
\end{proof}

\subsection*{A.3 Practical Implications and Limits}

These results justify the practical approach of selecting $\alpha$ by maximizing validation log-score, as implemented in our Bayesian conformal procedure. The overall effectiveness of the method in finite samples depends on the adequacy of the influence approximation, the regularity of the loss surface, and the representativeness of the held-out data. In regimes where these conditions are met, our method yields predictive distributions that are both sharp and empirically well-calibrated.

\subsection*{A.4 Empirical Accuracy of the Influence Function Approximation}

Figure~\ref{fig:nn_bootstrap_vs_influencebb} (reproduced below) demonstrates the near-equivalence of the predictive mean and interval width for the full bootstrap (retraining) and the influence function Bayesian bootstrap on a neural network regression task. The accuracy of the linear approximation for practical sample sizes and models underscores its utility for scalable Bayesian conformal inference.

\begin{figure}[h]
    \centering
    \includegraphics[width=0.7\textwidth]{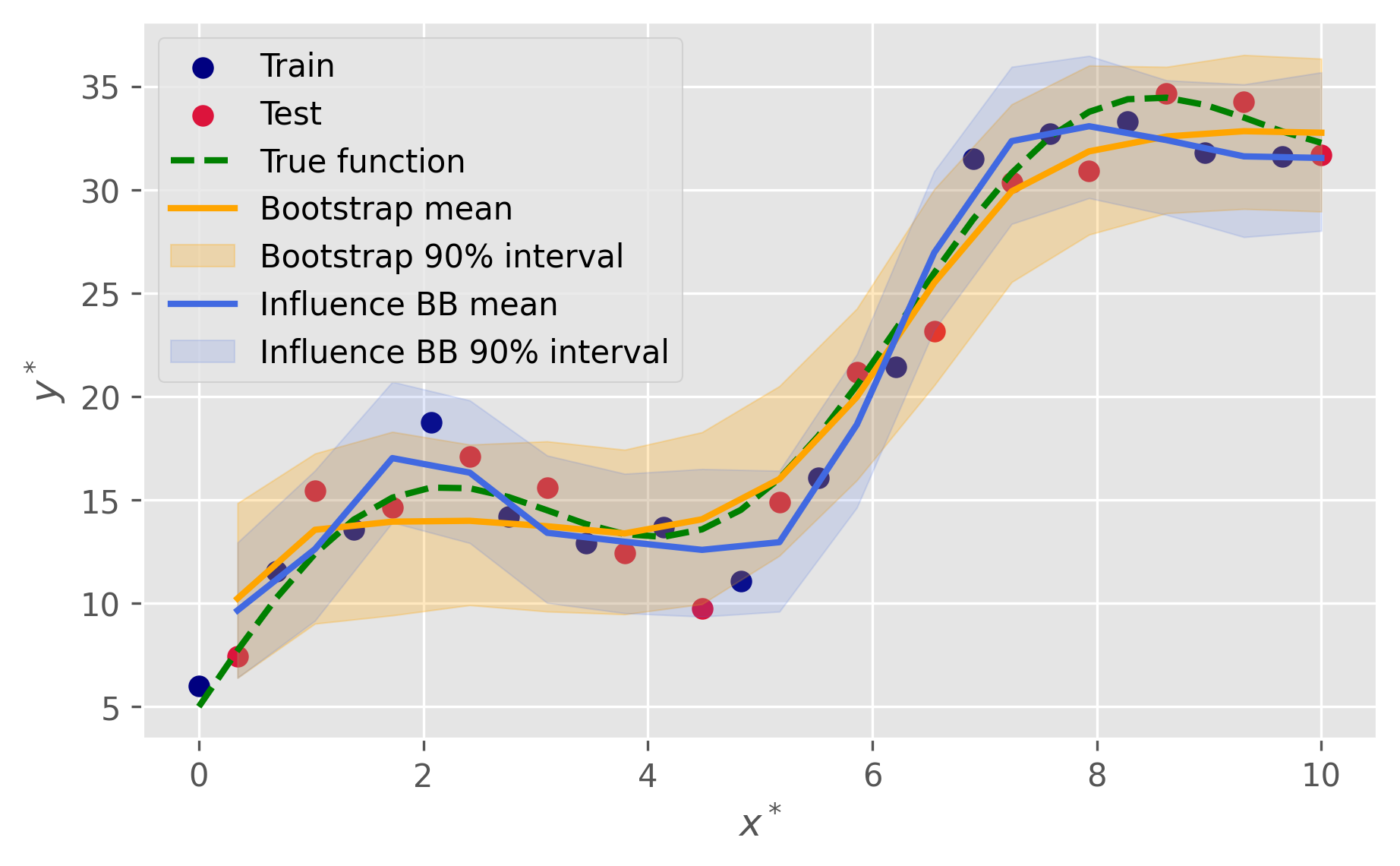}
    \caption{
        \textbf{Predictive mean and 90\% prediction intervals for a neural network using the full bootstrap (retraining) and the influence function Bayesian bootstrap (BB), both including observation noise.}
        The orange curve and shaded region show the mean and 90\% interval from the full bootstrap, while the blue curve and region show the same from the influence function BB. Both methods produce nearly identical uncertainty estimates and predictive fits, with the influence function BB being much faster to compute.
    }
    \label{fig:nn_bootstrap_vs_influencebb}
\end{figure}

\appendix
\setcounter{section}{1}  % This sets it to B
\section{ Dropout as Implicit Bootstrapping}

\subsection{Theoretical Foundation}

Monte Carlo Dropout can be understood as an implicit form of bootstrapping that operates at the parameter level rather than the data level. While traditional bootstrap methods resample training observations, dropout effectively ``drops out'' parameters and their associated connections, which can be interpreted as removing the influence of data points that contribute to those specific parameters. This perspective provides a data-centric view that complements the traditional parameter-centric understanding of dropout regularization.

\subsection{Parameter-Data Correspondence}

In a neural network, each parameter $w_{ij}$ connecting neuron $i$ to neuron $j$ encodes information from all training examples that influenced its value during backpropagation. When dropout sets $w_{ij} = 0$ with probability $p$, it temporarily removes all the accumulated knowledge from training samples that contributed to that parameter. This is analogous to excluding those training examples from the current prediction, creating an implicit bootstrap sample.

Consider a simple fully connected layer transformation:
\begin{equation}
\mathbf{h} = \sigma(\mathbf{W} \cdot \mathbf{x})
\end{equation}

With dropout mask $\mathbf{m} \sim \text{Bernoulli}(1-p)$:
\begin{equation}
\mathbf{h} = \sigma((\mathbf{W} \odot \mathbf{m}) \cdot \mathbf{x})
\end{equation}

Each zero in the mask $\mathbf{m}$ eliminates parameters that encode specific training patterns, effectively creating a sub-network that represents a different ``view'' of the training data.

\subsection{Information Removal Mechanism}

When a parameter is dropped, the model loses access to the specific data-driven patterns that parameter learned. This can be formalized through the lens of influence functions. If parameter $w_{ij}$ was primarily influenced by training samples $\{x_k\}_{k \in S}$, then dropping this parameter approximates removing the influence of samples in set $S$ from the current prediction.

The dropout mechanism thus performs implicit data bootstrapping through stochastic parameter elimination, where each forward pass randomly removes different parameter subsets. This knowledge pathway disruption breaks connections and removes learned associations from specific training examples, while the resulting sub-network sampling means each dropout configuration represents a different model trained on an implicit subset of data.

\subsection{Comparison with Explicit Bootstrap}

Traditional bootstrap explicitly resamples training data:
\begin{equation}
\mathcal{D}_b = \{(x_i, y_i)\}_{i \sim \text{Multinomial}(n, \mathbf{p})}
\end{equation}

Dropout implicitly creates bootstrap-like effects through parameter resampling:
\begin{equation}
\Theta_b = \{w_{ij} : m_{ij} = 1\} \subset \Theta
\end{equation}

The key insight is that $\Theta_b$ represents knowledge from an implicit data subset, making each dropout sample analogous to training on a bootstrap resample. Unlike explicit bootstrap methods that require retraining multiple models, dropout achieves similar uncertainty quantification through stochastic forward passes of a single trained network.

\subsection{Bayesian Interpretation}

From a Bayesian perspective, dropout approximates sampling from the posterior distribution over model parameters:
\begin{equation}
p(\Theta | \mathcal{D}) \approx \prod_{ij} \text{Bernoulli}(w_{ij}; 1-p)
\end{equation}

Each dropout mask corresponds to a different posterior sample, where the ``missing'' parameters represent uncertainty about which training examples are most relevant for the current prediction. This uncertainty naturally translates to predictive uncertainty, as different parameter subsets representing different implicit data views produce different predictions. The resulting variance in predictions across multiple forward passes provides a natural estimate of epistemic uncertainty.

\subsection{Empirical Observations and Limitations}

The bootstrap interpretation explains several empirical observations about dropout uncertainty. Higher dropout rates increase uncertainty because more parameters are dropped, removing more training knowledge and producing higher prediction variance. Uncertainty correlates with prediction difficulty since complex patterns requiring many parameters show higher variance when those parameters are randomly removed. Additionally, out-of-distribution detection works well because inputs unlike training data rely on broader parameter sets, making them more sensitive to dropout-induced parameter removal.

However, the dropout-as-bootstrap interpretation has important limitations. Parameter interdependence means that unlike independent data points, parameters have complex dependencies that aren't captured by simple Bernoulli sampling. The distinction between dropout during training versus inference means that inference-time dropout doesn't perfectly mirror the effect of training on different data subsets. Furthermore, the $1/(1-p)$ scaling during inference is a practical correction rather than a theoretically principled bootstrap operation.

\subsection{Practical Implications}

Understanding dropout as implicit bootstrapping provides valuable intuition for uncertainty calibration, explaining why dropout uncertainty estimates are often well-calibrated through their approximation of natural variability from different training data views. This perspective informs architecture design decisions, clarifying why deeper networks with more parameters benefit more from dropout due to having more parameters to ``bootstrap'' over. For hyperparameter tuning, this view suggests that dropout rates should be tuned based on dataset size and complexity, balancing information retention with uncertainty estimation quality.

This data-centric perspective bridges the gap between dropout's empirical success in uncertainty quantification and its theoretical foundations in approximate Bayesian inference, providing complementary understanding to traditional parameter-focused explanations of dropout regularization.

\end{document}